\newif\iffull
  \setlist[itemize]{leftmargin=*}
  \setlist[enumerate]{leftmargin=*}
\newtheorem{theorem}{Theorem}
\newtheorem{corollary}{Corollary}[section]
\newtheorem{definition}[corollary]{Definition}
\newtheorem{lemma}[corollary]{Lemma}
\newtheorem{claim}[corollary]{Claim}
\newtheorem{fact}[corollary]{Fact}
\theoremstyle{definition}
\newcommand{\R}{{\mathbb R}}
 \newcommand{\abs}[1]{\lvert #1 \rvert}
\newcommand{\seq}{\subseteq}
\newcommand{\Bits}{\{0,1\}}
\newcommand{\A}{\mathcal A}
\newcommand{\E}{\mathbb E}
\newcommand{\parhead}[1]{\noindent {\textbf{#1}} \hskip 0.6em}
\newcommand{\B}{\mathcal B}
\newcommand{\reqseq}{(i_t)_{t=1}^{T}}
\newcommand{\opt}{\mathsf{OPT}}
\begin{document}

\title{Lower Bounds for Caching with Delayed Hits}

\author{Peter Manohar}
\affiliation{%
  \institution{Carnegie Mellon University}
 }
\email{pmanohar@cs.cmu.edu}

\author{Jalani Williams}
\affiliation{%
  \institution{Carnegie Mellon University}
}
\email{jalaniw@cs.cmu.edu}


\begin{abstract}
Caches are a fundamental component of latency-sensitive computer systems. Recent work of \cite{AtreSWB20} has initiated the study of delayed hits: a phenomenon in caches that occurs when the latency between the cache and backing store is much larger than the time between new requests. We present two results for the delayed hits caching model.

\emph{(1) Competitive ratio lower bound.} We prove that the competitive ratio of the algorithm in \cite{AtreSWB20}, and more generally of any deterministic online algorithm for delayed hits, is at least $\Omega(k Z)$, where $k$ is the cache size and $Z$ is the delay parameter.

\emph{(2) Antimonotonicity of the delayed hits latency.} Antimonotonicity is a naturally desirable property of cache latency: having a cache hit instead of a cache miss should result in lower overall latency. We prove that the latency of the delayed hits model is not antimonotone by exhibiting a scenario where having a cache hit instead of a miss results in an \emph{increase} in overall latency. We additionally present a modification of the delayed hits model that makes the latency antimonotone.
\end{abstract}

\maketitle



\section{Introduction}
Caches are a key component of real-world computer systems, improving throughput for applications that access the same data frequently. Caches serve as an intermediary between a client requesting items and a backing store containing the items, masking the long delay to fetch an item from the backing store by storing a small number of items locally. 

In the classical caching problem, we are given $n$ items, and a cache containing a subset $S \seq [n]$ of $k$ of these items. At every timestep, the cache gets a request for an item $i \in [n]$. If $i \in S$, we say that the cache has a ``hit''; otherwise, the cache has a ``miss'' and the item is retrieved from the backing store. The caching algorithm then decides whether or not to cache the newly retrieved item, and if so what item to evict from the cache, with the ultimate goal being to minimize the total number of misses. There are many different classical caching algorithms, for example the Least Recently Used (LRU) policy, which discards the item that was requested the furthest in the past.

The classical caching problem is a theoretical model of real-world caches that assumes an item appears immediately in the cache once it is requested. This assumption is reasonable when the time between requests is much slower than the time it takes to fetch the requested item and load it into the cache. This is because when the next request is received, the item that was previously being fetched is already loaded into the cache. However, when this is not the case several requests can arrive while an item is being fetched, producing a phenomenon known as \emph{delayed hits}.  A delayed hit occurs when multiple requests for the same item occur while the item is already being fetched.

To understand what a delayed hit is, consider the following example. Suppose that the cache $S$ initially consists of the items $\{1,2\}$, and that it takes $Z = 100$ ms for an item to be retrieved from the backing store. At time $t = 0$ ms, a request for item $3$ arrives, and at time $t = 25$ ms and $t = 50$ ms two more requests for item $3$ arrive. Since $3$ is not in the cache, the first request misses, and experiences a delay of $100$ ms, the time it takes for the item to be retrieved. However, the second and third requests only experience delays of $75$ ms and $50$ ms respectively, as a request for item $3$ was already ``in flight'' when the other requests arrived.

Delayed hits are not a minor technical issue with the classical caching model: they contribute substantially to actual latencies in practice. \cite{AtreSWB20} devised an algorithm for caching with delayed hits that had between $0.1\%$ and $38\%$ better latency compared to the best classical caching algorithm. This is especially significant because the algorithm in \cite{AtreSWB20} is \emph{online}: the algorithm's decisions are only based on the past requests, whereas the optimal classical caching algorithm is \emph{offline}: it is given the full sequence of requests in advance, and so its decisions can not only depend on past requests, but also future ones.

Understanding delayed hits is thus essential to minimizing cache latency in practice. However, so far there has been little work on delayed hits in practice, and even less work on achieving a theoretical understanding of delayed hits. In this paper, we prove two new results about delayed hits, making progress towards a better theoretical understanding of delayed hits.

\subsection{Our results}

We now describe our two main results for the delayed hits caching problem.

\parhead{Competitive ratio lower bound.} In our first result, we show that the theoretical guarantees of the algorithm in \cite{AtreSWB20} are actually quite poor. Specifically, we lower bound the \emph{competitive ratio} of the algorithm: the smallest $\alpha$ such that the latency of the online algorithm on any sequence of requests is at most $\alpha$ times the latency of the best offline algorithm on that sequence. The competitive ratio is the value used to judge the quality of online algorithms, and captures a notion of minimal regret. An $\alpha = O(1)$ indicates that the online algorithm is always a constant-factor approximation of the optimal offline algorithm, which is typically quite good, whereas an $\alpha$ that grows asymptotically usually indicates poor performance. We prove that the competitive ratio of \cite{AtreSWB20} is at least $\Omega(kZ)$, where $k$ is the size of the cache and $Z$ is the time it takes to load an item into the cache. More generally, we prove the following theorem.
\begin{theorem}
\label{thm:1}
Any deterministic algorithm $\A$ for the delayed hits problem has a competitive ratio of \mbox{$\alpha_{\A} \geq \Omega(k Z)$}. 
\end{theorem}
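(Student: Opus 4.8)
The plan is to fix an arbitrary deterministic online algorithm $\A$ and build an \emph{adaptive} adversarial request sequence on which $\A$'s total latency is $\Omega(kZ)$ times that of one explicit offline strategy that I will exhibit (since this strategy is just one particular offline algorithm, its cost upper‑bounds $\opt$, which is exactly what a competitive‑ratio lower bound needs). Determinism of $\A$ is what makes the adversary possible: by simulating $\A$ the adversary knows $\A$'s cache contents at every timestep, and the entire construction is steered by that knowledge. I would read the target $\Omega(kZ)$ as the product of two separate effects, and the whole difficulty is to make them \emph{compound} rather than merely add. The factor $k$ is the classical paging lower bound: with $k+1$ (or slightly more) items and a cache of size $k$, at every moment some item is missing from $\A$'s cache, the adversary can perpetually request a missing item, and a Belady‑style offline strategy together with the standard phase/charging argument is ``caught'' only a $\Theta(1/k)$ fraction as often as $\A$. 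The factor $Z$ is the delayed‑hit effect: once $\A$ suffers a cold miss on an item, a burst of $\Theta(Z)$ rapid‑fire requests to that same item converts that single miss into $Z+(Z-1)+\cdots=\Theta(Z^2)$ of latency that $\A$ cannot avoid after the first miss, whereas anyone who already has the item resident sees the whole burst as true hits.

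Concretely, I would organize the sequence into epochs. In each epoch the adversary alternates between (i) short ``probe'' subsequences that, exploiting $\A$'s obliviousness exactly as in the ordinary paging lower bound, maneuver $\A$ into evicting items that are about to be heavily reused, and (ii) bursts of length $\Theta(Z)$ aimed at items $\A$ has just been pushed into not holding, so that $\A$ pays $\Theta(Z^2)$ per burst and is hit by $\Theta(k)$ such bursts per epoch, i.e.\ $\Theta(kZ^2)$ per epoch. The offline strategy I analyze does the opposite: using lookahead it commits to keeping every soon‑to‑be‑bursted item resident, so it never pays more than $\Theta(Z)$ on the expensive events of an epoch; summing over epochs and dividing (and absorbing the additive cold‑start term in the competitive‑ratio definition) yields $\alpha_\A\ge\Omega(kZ)$.

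The crux — and the place where naive versions of this plan collapse to only $\Omega(k)$ or only $\Omega(Z)$ — is arranging the probes and bursts so the two amplifications truly multiply. The probe subsequences must be \emph{cheap for the offline strategy} (if their cost is comparable to the bursts' cost for offline, it swamps the ratio and one recovers only $\Omega(Z)$); the bursts must be \emph{unavoidable for $\A$} on a constant fraction of a size‑$\Theta(k)$ set per epoch (if a ``clever'' $\A$ can escape after a single burst, one recovers only $\Omega(Z)$); and the offline strategy must \emph{provably} be able to keep all the relevant items resident within its budget of $k$ slots (if it too is forced to eat a $\Theta(Z^2)$ burst on some events, one recovers only $\Omega(k)$). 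I expect the technical heart of the proof to be making all three hold at once: this should come down to a careful choice of how many items of each ``role'' the construction uses, together with a potential‑function argument that tracks the symmetric difference between $\A$'s cache and the offline cache across an epoch, showing that the probes keep $\A$ perpetually short of the items that will be bursted while the offline cache stays ahead.
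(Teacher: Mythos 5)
Your high-level plan is the same as the paper's: exploit determinism via an adaptive adversary over a universe of $k+1$ items, always target the item currently absent from $\A$'s cache, and amplify each of $\A$'s misses into $\Theta(Z^2)$ latency with a burst of $Z$ back-to-back requests. However, you explicitly leave unresolved the one step on which the whole bound hinges — ensuring the offline strategy pays only $O(Z)$ \emph{in total} while $\A$ eats $\Theta(k)$ bursts — and the machinery you propose for it (epochs, the standard phase/charging argument, a potential function on the symmetric difference of the caches) points in the wrong direction. In a multi-epoch/phase setup, offline is caught once per phase, and if that catch lands on a burst, offline pays $\Theta(Z^2)$ per phase and the ratio collapses to $\Omega(k)$; you correctly flag this failure mode but do not exhibit a construction avoiding it.

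The resolution in the paper requires none of that machinery: use a \emph{single, finite} request sequence (the competitive ratio is a worst case over sequences, so one sequence suffices). Open with one isolated, non-burst request for item $k+1$, which forces every algorithm — including offline — to pay exactly $Z$ once, cheaply. Then repeatedly issue a burst of $Z$ requests for whichever item of $\{1,\dots,k+1\}$ is currently missing from $\A$'s cache (padding bursts with $Z$ idle steps so each is isolated and a miss on its first request forces a miss on all $Z$ of them), and \emph{stop as soon as $k$ distinct items have been bursted}. By pigeonhole some item $j\in\{1,\dots,k+1\}$ is never bursted; the offline strategy evicts $j$ at time $0$, holds the other $k$ items forever, and never misses again, so $\opt(\sigma_{\A})=Z$ while $\A(\sigma_{\A})\ge k\cdot Z(Z+1)/2$, giving the ratio $\Omega(kZ)$ directly. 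The termination condition and the fact that offline's one unavoidable miss is deliberately placed on a cheap non-burst request are the two ideas your proposal is missing; without them the "compounding" you identify as the crux does not go through.
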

The key idea is to use the fact that $\A$ is deterministic to construct a fixed sequence of cache requests where $\A$ has a cache miss on every request, whereas the offline optimal algorithm only has one miss. The proof of \cref{thm:1} can be found in \cref{sec:ratio}.

\parhead{Non-antimonotonicity of latency.} For a sequence of $T$ cache requests, the performance of any algorithm can be encoded as a sequence of bits $b_1, \dots, b_T$, denoting whether or not the $i$-th request was a full cache hit.\footnote{Delayed hits are viewed as ``partial misses'' and do not count as hits.} The total latency of the algorithm can be computed from these bits, so the latency is $\ell(b_1, \dots, b_T)$ for some latency function $\ell \colon \Bits^T \to \R$. Note that not all settings of the $b_i$'s correspond to valid caching algorithms, e.g.\ setting $b_i = 1$ for every $i$ is typically not valid, as this would imply that every request was a cache hit. Intuitively, $\ell$ should be antimonotone\footnote{A boolean function $f \colon \Bits^n \to \R$ is antimonotone if for every $b,b' \in \Bits^n$ where $b'_i \geq b_i$ for all $i$ it holds that $f(b') \leq f(b)$.} as having \emph{more} cache hits should only be able to \emph{decrease} the total latency. We show that, surprisingly, this is not the case by proving the following theorem.
\begin{theorem}\label{thm:2}
There exists a sequence of cache requests such that the delayed hits latency function $\ell$ is not antimonotone.
\end{theorem}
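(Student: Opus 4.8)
The goal is to exhibit a concrete request sequence together with two "hit/miss profiles" $b$ and $b'$, where $b'$ dominates $b$ coordinatewise (i.e. $b'$ has at least the cache hits of $b$, and at least one more), yet the total delayed-hits latency strictly increases: $\ell(b') > \ell(b)$. First I would pin down exactly how $\ell$ is computed from a profile: a full hit contributes $0$ additional latency beyond the base serving cost, a full miss triggers a fetch of duration $Z$ and contributes $Z$, and a delayed hit on an item whose fetch was initiated $\delta < Z$ timesteps ago contributes $Z - \delta$. The crucial structural feature to exploit is that flipping one request from a miss to a hit changes \emph{which item the cache is actively fetching at that moment}, and hence changes the latency charged to \emph{later} requests for a \emph{different} item. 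So the single bit flip is "local," but its latency consequences are not.

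The concrete construction I would aim for uses a cache of size $k=1$ (so at most one item is resident and at most one fetch is in flight at a time — this keeps the bookkeeping clean) and two items, say $x$ and $y$, plus $Z$ large enough to create overlapping in-flight windows. Sketch: start with $x$ in the cache. Issue a request for $y$ at time $0$; in profile $b$ this is a miss, so a fetch of $y$ begins and runs until time $Z$. Then issue a short burst of requests for $y$ at times $1, 2, \dots$, which in $b$ are delayed hits (cheap, since the fetch is already in flight), and crucially do \emph{not} re-trigger any fetch. Now in profile $b'$ we instead declare the time-$0$ request for $y$ to be a full hit (one can make this consistent with a valid algorithm by having $y$ already resident — e.g. by prepending requests that load $y$ into the cache under $b'$ but not under $b$, paid for identically in both profiles, or more simply by choosing the "before" scenario as the one where an \emph{earlier} request for $y$ was a miss and the "after" scenario as the one where it was a hit). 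Because $y$ is now resident (or its fetch completed earlier), the later burst of $y$-requests that were cheap delayed hits in $b$ become full hits in $b'$ — fine, that only helps. The damage comes from interleaving requests for $x$: in $b$, the cache is busy fetching $y$ and then evicts $x$ only once, so a later sequence of $x$-requests incurs exactly \emph{one} miss plus delayed hits; in $b'$, because the timeline of evictions/fetches is shifted, the $x$-requests line up so that the fetch of $x$ has \emph{already completed} before the next $x$-burst, forcing a \emph{second} full miss on $x$ (cost $Z$) instead of a batch of cheap delayed hits (total cost $< Z$). Summing, $\ell(b') = \ell(b) + (\text{extra } Z) - (\text{savings} < Z) > \ell(b)$.

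The steps, in order: (i) fix the formal definition of $\ell$ on profiles, including the rule that a full hit does not initiate a fetch whereas a full miss does, and that delayed-hit cost is $Z$ minus the elapsed in-flight time; (ii) write down the explicit request sequence on items $\{x,y\}$ with $k=1$ and a specific choice of $Z$ (I expect $Z$ on the order of the burst lengths, something like $Z = 3$ or $Z=4$ with a sequence of length $\sim 10$); (iii) compute $\ell(b)$ by walking the timeline — one miss on $y$, a cluster of delayed hits, then fetch-of-$x$ timing arranged so $x$'s re-entry is a single miss followed by delayed hits; (iv) flip the designated bit to get $b'$, re-walk the timeline, and verify the eviction/fetch schedule now forces the second full miss on $x$; (v) check $b' \ge b$ coordinatewise (the flipped bit goes $0 \to 1$, and argue no other coordinate is forced from $1 \to 0$ — this is where choosing $k=1$ and a short sequence makes the check tractable), and conclude $\ell(b') - \ell(b) > 0$.

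The main obstacle, and the part I'd spend the most care on, is step (v) combined with making the example genuinely \emph{valid}: I need $b$ and $b'$ to each be realizable by \emph{some} caching algorithm on the \emph{same} request sequence, and I need $b' \ge b$ as boolean vectors. It is easy to accidentally produce a $b'$ that, while having the extra hit, also turns some other delayed-hit-or-miss into... well, delayed hits count as misses ($b_i = 0$), so the real risk is turning some other full hit ($b_j=1$) into a miss ($b_j=0$) when the schedule shifts, which would violate $b' \ge b$. Controlling this ripple is the crux; the $k=1$ restriction and keeping the "perturbed region" of the timeline short and explicit is how I would keep it under control, and honestly it may be cleanest to just tabulate the full timeline for both profiles side by side and read off the latencies and the bit vectors directly.
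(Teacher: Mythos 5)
There is a genuine gap, and it lies in the mechanism you commit to for making the latency go up. By \cref{lem:latencyfn}, the delayed-hits latency function decomposes per item: the latency charged to a request at time $t$ is $(1-b_t)\bigl(Z-(t-p_t)\bigr)$ where $p_t$ ranges only over earlier times $t'$ in the window with $i_{t'}=i_t$ \emph{and} $b_{t'}=0$. Consequently, flipping a bit of a request for $y$ from $0$ to $1$ cannot change the latency contributed by any request for a different item $x$: if all the $x$-bits are unchanged, the $x$-latencies are unchanged, and if some $x$-bit drops from $1$ to $0$ you lose $b'\geq b$. Your plan routes the entire damage through "the fetch/eviction schedule for $x$ shifting, forcing a second full miss on $x$" --- but that shift is a property of a particular algorithm's cache evolution, not of the function $\ell$ evaluated on the bit vector, so it cannot produce $\ell(b')>\ell(b)$.

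The correct source of the increase is the one you explicitly dismiss as harmless: the later burst of requests for the \emph{same} item $y$. In the paper's construction (\cref{lem:thm2}), the surrounding requests for the other item force $y$ to be evicted immediately after the flipped request, so when the early $y$-request is a hit the later $y$-burst does \emph{not} become a batch of full hits; it finds neither a cached copy nor an in-flight fetch, and must pay $Z+(Z-1)+\cdots\approx (Z+1)z - z(z+1)/2$ for $z=\lfloor Z/2\rfloor$ requests. When the early $y$-request is instead a miss, its in-flight fetch covers the burst for only $1+2+\cdots+z = z(z+1)/2$, a savings of order $Z^2$ that dwarfs the single $Z$ paid for the extra miss. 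Your scaffolding (cache size $1$, a second item whose pending requests pin down the eviction, prepended requests to make both profiles feasible, differing in exactly one coordinate) is exactly right and matches the paper; what is missing is redirecting the quantitative argument from cross-item interference, which $\ell$ cannot see, to the same-item burst, which it can. Note also that the strict inequality requires $Z\geq 5$, so $Z=3$ or $4$ will not suffice.
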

In particular, we exhibit a scenario where an algorithm can choose between having a request hit or having it miss, without changing whether or not the other requests hit or miss. In this scenario, we show that not only is it better for the algorithm to have the request \emph{miss}, but moreover this choice is \emph{optimal}, and is the unique way to minimize latency. Our key idea here is to exploit the fact that having a cache miss can decrease the latency of later requests to design a request sequence gadget where having a cache miss results in an overall decrease in latency. The proof of \cref{thm:2} can be found in \cref{sec:antimonotone}.

We then exhibit a model of delayed hits different from the one in \cite{AtreSWB20} that we call ``antimonotone delayed hits'', and show that the latency function for this model is always antimonotone. We then give the following reduction from antimonotone delayed hits to delayed hits.
\begin{theorem}\label{thm:3}
Any algorithm $\A$ with cache size $k$ for antimonotone delayed hits can be transformed to an algorithm $\B$ with cache size $k + Z$ for delayed hits such that $\text{latency}(\B) \leq \text{latency}(\A)$ for every sequence of cache requests.
\end{theorem}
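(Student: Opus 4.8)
The plan is to let $\B$ simulate $\A$ step for step, spending its $Z$ extra cache slots to explicitly keep track of the in-flight items that the antimonotone model manages implicitly through its fetch buffer / pipeline. Write $C_t$ for the set of items $\A$ holds in its (size $\le k$) main cache after request $t$, and let $G_t$ be the set of items whose fetch $\A$ has started but not yet resolved, together with any items $\A$ is currently keeping in its fetch buffer. I would maintain as an invariant that after every request $t$ the physical cache of $\B$ (of size $k+Z$) equals the union of $C_t$ with the subset of $G_t$ whose fetches have already completed in $\B$'s timeline, and that $\B$ has the same items in flight, with the same residual delays, as $\A$. The structural fact that makes $k+Z$ slots sufficient is that $|G_t| \le Z$: an item enters $G_t$ only on a full miss, at most one full miss occurs per timestep, and each fetch resolves within $Z$ timesteps, so at no point can more than $Z$ items be in the pipeline.

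To maintain the invariant, $\B$ mirrors $\A$'s actions: ordinary replacements $\A$ performs inside $C$ are copied verbatim (the invariant guarantees the evicted item is physically present in $\B$'s cache); when $\A$ has a full miss on $i$, the invariant implies $\B$ also has a full miss on $i$, so $\B$ begins the identical fetch, and since in-flight items occupy no cache space in the standard model, $\B$ will have a free slot to park $i$ when the fetch returns; and when $\A$'s fetch of $i$ completes and $\A$ decides to promote $i$ into $C$ (evicting some $j$) or to discard $i$, $\B$ makes the matching move at the matching timestep. A short induction on $t$ then shows the invariant is preserved and that $\B$ never exceeds $k+Z$ items.

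It remains to compare latencies. The invariant, together with the fact that $\A$ and $\B$ start their fetches at identical times, implies that request $t$ is classified identically by the two algorithms: a full hit for $\A$ (the item is in $C_t$ or is a resolved buffer item) corresponds to the item being physically in $\B$'s cache, hence a full hit with latency $0$ for $\B$ as well; a delayed hit for $\A$ (the item is in flight) occurs exactly when the item is in flight for $\B$, with the same remaining delay, hence the same partial-miss latency; and otherwise both have a full miss, with latency $Z$. Summing over $t$ gives $\text{latency}(\B) = \text{latency}(\A)$, which is stronger than the stated inequality. (The inequality formulation leaves room for $\B$ to deviate and do strictly better, but the exact simulation already suffices.)

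The step I expect to be the main obstacle is the bookkeeping at fetch-completion times, where $\A$'s fetch buffer has ``held'' $i$ for the whole fetch duration while $\B$ physically cannot place $i$ until the fetch returns: one has to verify that this timing gap never forces $\B$ above $k+Z$ slots and never causes a request at the boundary step $t+Z$ to be misclassified. Nailing down exactly how the formal definition of antimonotone delayed hits treats a just-completed buffer item (whether it must immediately be promoted or discarded, how ties are broken, whether a freshly resolved item can be re-fetched) is the other place care is needed, but in every case $\B$ simply replays $\A$'s choice, so no genuinely new idea should be required beyond the invariant above.
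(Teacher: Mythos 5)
There is a genuine gap, and it sits exactly at the point where the two models differ. The defining feature of the antimonotone model is that \emph{every} request, including a cache hit, is sent to the backing store, so a later miss can piggyback on a fetch that was triggered by a hit. Your simulation never confronts this: you assert that $\A$ and $\B$ classify every request identically, with the same in-flight items and the same residual delays (and you even justify $\abs{G_t}\le Z$ by saying items enter the pipeline only on full misses). But $\B$ lives in the standard delayed-hits model, where by definition a hit is served from the cache and no request is sent to the backing store, so $\B$ \emph{cannot} have in flight a fetch that $\A$ launched on a hit. Concretely: suppose $\A$ hits on item $i$ at time $t'$, later evicts $i$, and $i$ is requested again at a time $t$ with $t-t'\le Z-1$. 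In the antimonotone model this second request pays at most $Z-(t-t')<Z$ thanks to the hit-triggered fetch; your $\B$ has no such fetch in flight, and unless it has deliberately kept $i$ cached it pays $Z$. So the claimed invariant (``same remaining delay'') fails, and your conclusion $\text{latency}(\B)=\text{latency}(\A)$ is false in general; only the inequality can hold, and it holds precisely because $\B$ can convert these problematic requests into full hits, not into matching delayed hits. Relatedly, the extra $Z$ slots are not needed to materialize a ``fetch buffer'' (in both models in-flight items occupy no cache space); they are needed for exactly the hit-triggered case you skipped.

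The missing idea, which is the paper's proof, is to use the $Z$ extra slots to keep cached the last $Z$ items requested/returned: $\B$ runs $\A$'s eviction policy on a size-$k$ portion $S_0$ of its cache and stores in the size-$Z$ portion $S_1$ the most recent $Z$ items. The only way a request's latency under the antimonotone model can be strictly smaller than what a straightforward simulation pays in the standard model is when the enabling earlier request was a hit for $\A$; but then the item was requested within the last $Z$ timesteps, hence lies in $S_1$, and $\B$ pays $0\le Z-(t-t')$. Every other request costs $\B$ at most what it costs $\A$, giving $\text{latency}(\B)\le\text{latency}(\A)$.
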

The key idea is to modify the delayed hits model so that the strange scenario in \cref{thm:2} does not occur, and then show that \cref{thm:2} is essentially the \emph{only} way in which the delayed hits model can be non-antimonotone. The formulation of the antimonotone delayed hits model and the proof of \cref{thm:3} can be found in \cref{sec:model}.

We note that the non-antimonotonicity of latency is very different from Belady's anomaly \cite{BeladyNS69}. Belady's anomaly is the fact that for certain classical caching algorithms and request sequences, increasing the cache size can sometimes result in worse overall latency. This is qualitatively different from antimonotonicity because Belady's anomaly is a property that depends on both the cache size $k$ and the caching algorithm, whereas antimonotonicity is a property of the latency function $\ell(\cdot)$, and the function $\ell(\cdot)$ is the same for all algorithms and all cache sizes. For example, in the classical caching problem the latency function is simply $\ell(b_1, \dots, b_T) :=  \sum_{i = 1}^T(1 - b_i)$, which is clearly antimonotone and also algorithm/cache~size independent. The cache size $k$ will determine what $(b_1, \dots, b_T)$ can be realized by a caching algorithm, but it does not affect the latency function $\ell(\cdot)$.

\subsection{Prior work}
The classical caching problem has been studied extensively, in both the offline and the online setting.  In the offline setting, \cite{Belady66} showed that the optimal algorithm is very simple: evict the item $j \in S$ that is requested again the latest in the future. In the online setting, \cite{SleatorT84} showed that every deterministic algorithm has a competitive ratio of at least $k$, and that the Least Recently Used (LRU) algorithm has a competitive ratio of exactly $k$. \cite{FiatKLMSY91} gave a randomized online algorithm with a competitive ratio of $2 H_k$, and showed that no randomized online algorithm can have competitive ratio better than $H_k$, where $H_k := 1 + \frac{1}{2} + \dots + \frac{1}{k}$ is the $k$-th harmonic number. Shortly after, \cite{McGeochS91} gave a randomized algorithm with a competitive ratio of $H_k$, which matches the lower bound.

On the other hand, there is little prior work on delayed hits. The formal model for delayed hits caching was only recently introduced in \cite{AtreSWB20}. This paper formulated an (inefficient) algorithm for the offline delayed hits problem, and gave an online algorithm based on rounding an efficient relaxation of the offline algorithm. To demonstrate the effectiveness of their online algorithm, the authors implemented the algorithm and showed significant improvements in latency in practice compared to the classical caching offline optimal algorithm of \cite{Belady66}. However, the paper did not prove any theoretical guarantees about the algorithm.

\section{Preliminaries}
\label{sec:prelims}

\subsection{The delayed hits model}
The delayed hits model is very similar to the classical caching model. We first recall the classical caching model, and then explain the changes in the delayed hits model. 

In the classical caching model, there are $n$ items, and a cache $S$ of size $k$ containing a subset of the $n$ items. The cache is initialized to $S_0 := \{1, \dots, k\}$, and the model proceeds in discrete timesteps. At the $t$-th timestep, the cache is currently $S_{t-1}$ and an item $i_t \in [n]$ arrives. If $i_t \in S_{t-1}$ then we have a cache hit, and we set $S_{t} \gets S_{t-1}$. If $i_t \notin S_{t-1}$ then we have a cache miss, and the caching algorithm can either evict some $j_t$ from the cache and replace it with $i_t$, thus setting \mbox{$S_t \gets S_{t-1} \cup \{i_t\} \setminus \{j_t\}$}, or leave the cache unchanged and set $S_{t} \gets S_{t-1}$. In either case, the algorithm incurs a cost of $1$ for the miss. We can view each timestep here as having two phases: the request phase, where the item $i_t$ is requested, and the retrieval phase, where the item $i_t$ is returned from the backing store and the cache is updated.

In the delayed hits model, there are a few significant changes. First, we have the delay parameter $Z$, which is a positive integer representing the number of timesteps it takes to fetch an item from the backing store. Now, when we have a cache miss for $i_t$, the fetch for item $i_t$ terminates $Z - 1$ timesteps in the future. The quantity $Z$ in the model corresponds to the maximum number of requests that can arrive during one fetch operation in the physical system. Second, we have a set $Q$ containing all the requested items that have not yet been served. As before, each timestep has two phases. In the request phase, a request for item $i_t$ arrives. If $i_t \in S_{t-1}$ then we have a cache hit. Otherwise, the request is sent to the backing store and we update $Q \gets  Q + (i_t, t)$, where the $+$ denotes the append operation. In the retrieval phase, item $i_{t - Z + 1}$ is returned, but \emph{only if it was a cache miss during its request phase earlier}. All requests for $i_{t - Z + 1}$ in $Q$ are then served, so we remove all tuples $(i_{t - Z + 1}, t')$ from $Q$. For each tuple we remove, the total latency increases by $t - t' + 1$, as the item $i_{t - Z + 1}$ arrived during the request phase at time $t'$ and the request has finished being served during the retrieval phase at time $t$. This is the \emph{latency incurred} by the $t'$-th item. As in the classical caching model, the algorithm then can decide to either cache $i_{t - Z + 1}$ or not. Finally, we also allow the requested item $i_t$ to be $0 \notin [n]$, denoting that no item was requested during the $t$-th timestep. In this case, at time $t$ the request phase is skipped and we proceed directly to the retrieval phase, and at time $t + Z - 1$ the retrieval phase is skipped.

We demonstrate how the model works with a simple example. First, consider the case where $i_t = k+1$ for $t = 1, \dots, Z$. In this case, we will show that the total latency for \emph{any} caching algorithm is $Z (Z+1)/2$. Since the cache $S_0$ is always $\{1, \dots, k\}$, the first request $i_1$ misses. Moreover, since $i_1$ can only be added to the cache at the end of the retrieval phase at time $t = Z$, all requests $i_1, \dots, i_Z$ are cache misses. So, for $t = 1, \dots, Z$, the request phase simply adds the tuple $(k+1, t)$ to $Q$, resulting in $Q = ((k+1, 1), \dots, (k+1, Z))$ before the retrieval phase at time $t = Z$. At the retrieval phase for $t = Z$ the request for $i_1$ returns. At this point, all requests in $Q$ can be served, and we incur a cost of $(Z - 1 + 1) + (Z - 2 + 1) + \dots (Z - Z + 1) = \sum_{r = 1}^Z r = Z(Z + 1)/2$. Note that this explains the choice of having a request arrive $Z - 1$ timesteps in the future (as opposed to $Z$), as it means that when the request returns it ``covers'' exactly $Z$ requests.

We note that the delayed hits model for $Z = 1$ is identical to the classical caching model, and in general the smaller $Z$ is the closer delayed hits is to classical caching. In practice, the value of $Z$ can vary substantially, from $Z = 1$ all the way to $Z \approx 2 \times 10^{5}$ \cite{AtreSWB20}.

\parhead{Hits, delayed hits, and misses.} Each request $i_t$ will always incur a latency in $\{0, \dots, Z\}$. If the latency is $0$ then we say that $i_t$ is a \emph{hit}. If the latency is $Z$ then we say that $i_t$ is a \emph{miss}, and if the latency is in $\{1, \dots, Z-1\}$ then we say that $i_t$ is a \emph{delayed hit}. In the case where we only are differentiating between hits and misses we will treat delayed hits as misses. This is because a delayed hit $i_t$ was a cache miss in the request phase at time $t$, and so from the perspective of the cache in the model it is a miss. 

\subsection{Offline and online algorithms}
In this section, we formally define competitive analysis, as well as online and offline algorithms. We begin with a definition.
\begin{definition}
A request sequence $\sigma$ is a finite sequence $\sigma:= (i_t)_{t = 1}^T$ where $i_t \in [n] \cup \{0\}$. If $i_t \in [n]$ then we say that $i_t$ is requested at time $t$; if $i_t = 0$ then no item is requested at time~$t$.
\end{definition}

\parhead{Algorithmic setting.} We now describe the nature of an algorithm in the delayed hits setting. Suppose an algorithm is given a request sequence $(i_t)_{t=1}^T$. At each request $i_t$, the algorithm serves $i_t$ immediately if it is in the cache, and otherwise places it in the request queue $Q$. Once a request returns from the backing store at time $t$, the algorithm must decide which object $j_t$ (if any) in the cache $S_{t}$ will be evicted to make room for the retrieved request, $i_{t - Z + 1}$. Given, a request sequence $\sigma$, the output of a delayed hits algorithm is then precisely the sequence of chosen evictions $(j_t)_{t=1}^T$. We use the convention that $j_t = 0$ if the algorithm chose not to cache $i_{t - Z + 1}$, or if $i_{t - Z + 1} = 0$ (so that no request was returned from the backing store at time $t$).

\parhead{Feasibility.} A sequence of evictions $(j_t)_{t = 1}^T$ is \textit{feasible} for a request sequence $\sigma$ if every attempted eviction $j_t$ is feasible, i.e.\ if $j_t \in S_t$ for all $t$. Likewise, given a request sequence $(i_t)_{t = 1}^T$ and eviction sequence $(j_t)_{t = 1}^T$, one can reconstruct the cache state $S_t$ at every time step. We define a sequence of cache states $(S_0, S_1, \dots, S_T) = (S_t)_{t=0}^T$ to be feasible in a similar way.

\begin{definition}
 A delayed hits algorithm is an algorithm $\A$ that takes as input a  request sequence $\sigma = \reqseq$ and outputs a feasible sequence of cache states $(S_t)_{t = 0}^T$.
\end{definition}

\parhead{Online and offline algorithms.} With these basic concepts defined, we now distinguish between offline and online delayed hits algorithms.

\begin{definition}
An offline delayed hits algorithm is a delayed hits algorithm wherein the chosen eviction at time $t$, $j_t$, can depend upon the full request sequence $\sigma$; i.e., $j_t = f_t( \sigma )$ for some function $f_t$.
\end{definition}

By contrast, an online algorithm evicting at time $t$ may only make use of information available at time $t$. Formally, let $\sigma_{t'}$ be the truncation of the request sequence at time $t'$, i.e. $\sigma_{t'} = (i_t)_{t=1}^{t'}$. We define an online delayed hits algorithm as follows:

\begin{definition}
 An online delayed hits algorithm is a delayed hits algorithm wherein the chosen eviction at time $t$, $j_t$ depends only upon the history of requests and evictions. In other words, $j_{t'} = f_\ell( (j_t)_{t = 1}^{t'-1}, \sigma_t)$ for some function $f_{t'}$.
\end{definition}

We make no assumptions about the function $f_t$ here, besides that it computes $j_t$ in finite time. In fact, we allow for the possibility that the computation of $f_t$ might use random bits; if an algorithm $\A$ uses an outside source of randomness during its computation, we call $\A$ a \textit{randomized} algorithm.

\parhead{The competitive ratio.} The competitive ratio is a measure of how far the performance of the an online algorithm can deviate from that of the optimal offline algorithm. For a request sequence $\sigma$, we let $\opt(\sigma)$ denote the latency incurred by the optimal offline delayed hits algorithm, and let $\E[\A(\sigma)]$ be the expected latency incurred by the algorithm $\A$, where the expectation is over potential randomness used in the computation of the $f_t$'s. We define the competitive ratio as an asymptotic bound between these two quantities.
\begin{definition}
Let $\A$ be an online delayed hits algorithm. The competitive ratio of $\A$, denoted by $\alpha_{\A}$, is the smallest value of $\alpha_{\A}$ such that
\begin{equation*}
\E[\A(\sigma)] \leq \alpha_{\A} \opt(\sigma)
\end{equation*}
holds for every request sequence $\sigma$.
\end{definition}
We note that $\alpha_{\A} \geq 1$ always holds, since \mbox{$\E[\A(\sigma)] \geq \opt(\sigma)$} because $\opt$ is optimal.

\subsection{Latency functions for delayed hits}
In this section, we define the delayed hits latency function for a given request sequence $\sigma$ and show that the delayed hits latency function gives exactly the latency incurred by a delayed hits algorithm.

\begin{definition}
For a given request sequence $\sigma = (i_t)_{t = 1}^T$ and an algorithm $\A$, we define the hit sequence of the execution of $\A$ on $\sigma$ to be the vector $b \in \Bits^T$ where $b_t = 1$ if the request for $i_t$ was a hit in the execution, and $0$ otherwise. Delayed hits count as misses.
\end{definition}
We note that we can only define hit sequences with respect to the execution of $\A$ as $\A$ may not be deterministic.

Some hit sequences cannot be produced by any algorithm $\A$. For instance, if $k < n$ and we consider the request sequence $\sigma = (1, 2, \dots, n)$, then the hit sequence $(1, 1, 1, \dots, 1)$ cannot be produced by any $\A$, as any algorithm $\A$ must have at least $1$ cache miss.

\begin{definition}
A hit sequence $b$ is \emph{feasible} for a request sequence $\sigma$ if there is an algorithm $\A$ such that the execution of $\A$ on $\sigma$ produces the hit sequence $b$ with nonzero probability.
\end{definition}

We now define the delayed hits latency function and prove that it is well-defined.
\begin{lemma}[Latency function]
\label{lem:latencyfn}
For every request sequence $\sigma = (i_t)_{t = 1}^T$, there is a computable function $\ell_{\sigma} \colon \Bits^T \to \R$ such that for every algorithm $\A$, the following holds. For any execution of the algorithm $\A$ on $\sigma$, letting $b$ be the corresponding hit vector and $L$ be the total latency incurred during this execution, we have that $\ell_{\sigma}(b) = L$.
\end{lemma}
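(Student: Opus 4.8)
The plan is to prove Lemma~\ref{lem:latencyfn} by showing that the total latency incurred during an execution depends only on the request sequence $\sigma$ and the hit vector $b$, and not on any other choices made by $\A$ (such as which items are evicted, or which non-requested items happen to be cached). The key observation is that in the delayed hits model, the latency incurred by the request at time $t'$ is determined entirely by \emph{when} that request is finally served. If $i_{t'}$ is a hit (i.e.\ $b_{t'} = 1$), it incurs latency $0$. If $i_{t'}$ is a miss or delayed hit (i.e.\ $b_{t'} = 0$), then it is served at the retrieval phase of the first timestep $t \geq t'$ at which a fetch for item $i_{t'}$ completes; that is, the latency is $t - t' + 1$ where $t$ is minimal such that $i_{t'}$ was requested and missed at some time $t'' \leq t'$ with $t'' - 1 + Z = t$, equivalently $t = t'' + Z - 1$ for the \emph{earliest} miss $t''$ on item $i_{t'}$ in the ``current fetch epoch'' covering $t'$.

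\textbf{Constructing $\ell_\sigma$ explicitly.} First I would fix $\sigma$ and $b \in \Bits^T$ and describe the function $\ell_\sigma(b)$ combinatorially. For each item $v \in [n]$, look at the subsequence of timesteps $t$ with $i_t = v$ and $b_t = 0$ (the non-hit requests for $v$), in increasing order. Process these greedily: the first such timestep $t_1$ triggers a fetch that completes at $t_1 + Z - 1$; every subsequent non-hit request $t_j \le t_1 + Z - 1$ for $v$ is ``covered'' by this same in-flight fetch and is served at time $t_1 + Z - 1$, incurring latency $(t_1 + Z - 1) - t_j + 1 = t_1 + Z - t_j$; the first non-hit request $t_j > t_1 + Z - 1$ starts a new fetch, and we repeat. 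Define $\ell_\sigma(b)$ to be the sum of all these incurred latencies over all items $v$ and all their non-hit requests (hits contributing $0$). This is clearly a computable function of $\sigma$ and $b$.

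\textbf{Matching $\ell_\sigma(b)$ to the actual execution latency.} Next I would argue that for any execution of any algorithm $\A$ on $\sigma$ producing hit vector $b$, the actual latency $L$ equals $\ell_\sigma(b)$. The crux is that the request phase of the model only checks membership $i_t \in S_{t-1}$, and the retrieval phase returns $i_{t-Z+1}$ only if it was a miss during its request phase; the bookkeeping of $Q$ (appending $(i_t,t)$ on a miss, and removing all tuples $(i_{t-Z+1}, t')$ when a fetch returns) is entirely determined by the sequence of request-phase hit/miss outcomes, which is exactly $b$ together with $\sigma$. So I would prove by induction on $t$ that the multiset of tuples in $Q$ after the retrieval phase at time $t$, and the total latency accumulated so far, are functions of $(\sigma, b)$ alone --- in particular the cache states $S_t$ and the eviction choices $j_t$ are irrelevant to $Q$ and to $L$, since they only affect which \emph{future} requests hit or miss, and those future outcomes are already pinned down by $b$. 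One subtlety to handle carefully: the model as stated appends $(i_t, t)$ to $Q$ on every miss, but only returns $i_{t-Z+1}$ if \emph{that particular request} at time $t-Z+1$ was a miss; I need to confirm that the greedy ``epoch'' structure above correctly captures which tuples get flushed at which retrieval phase --- namely that a tuple $(v, t')$ is flushed at the first time $t \ge t'$ with $t = t'' + Z - 1$ for $t''$ the earliest miss on $v$ with $t'' \le t' $ and no completed fetch for $v$ strictly between $t''$ and $t'$. Reconciling this with the $+Z-1$ offset and the ``only if it was a cache miss during its request phase'' clause is the one place where an off-by-one or an edge case (e.g.\ $i_t = 0$, or a request that hits landing in the middle of a fetch epoch) could bite.

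\textbf{Main obstacle.} The conceptual content is light --- the real work is a careful, fully rigorous induction showing the queue dynamics and accrued latency are a deterministic function of $(\sigma, b)$, and matching that to the closed-form greedy description of $\ell_\sigma$. The trickiest step is the bookkeeping around delayed hits versus full misses and the $Z-1$ offset: I must make sure that when $i_{t-Z+1}$ returns it is exactly the tuples $(i_{t-Z+1}, t')$ with $t'$ ranging over the requests covered by that one in-flight fetch, and that a later miss on the same item that occurs \emph{during} the fetch does not spuriously trigger a second, earlier return. Getting the definition of ``epoch'' precisely aligned with the model's retrieval rule is where I would spend the most care; once that is pinned down, the equality $\ell_\sigma(b) = L$ follows by summing the per-request latencies, and computability of $\ell_\sigma$ is immediate from its explicit greedy construction.
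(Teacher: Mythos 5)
There is a genuine gap: your explicit ``greedy epoch'' construction of $\ell_\sigma$ does not compute the latency realized by the model. In the delayed hits model, \emph{every} request that misses in its request phase is sent to the backing store and triggers its own fetch returning $Z-1$ steps later --- including requests that are themselves covered by an earlier in-flight fetch (i.e.\ delayed hits). Your construction instead dispatches a fetch only for the request that \emph{starts} an epoch, and begins the next fetch only at the first non-hit request after $t_1 + Z - 1$. Concretely, take $Z = 5$ and an item $v$ that misses at times $1, 5, 6$. In the model, the miss at time $5$ is served by the fetch from time $1$ (latency $1$) but still dispatches its own fetch, which returns at time $9$ and serves the request at time $6$ with latency $4$; your formula starts a fresh epoch at time $6$ and charges it latency $Z = 5$. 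The paper's formula avoids this by setting $p_t = \min\{t' \in [t-Z+1,t] : b_{t'} = 0 \wedge i_{t'} = i_t\}$ and $l_t = (1-b_t)(Z - (t-p_t))$: the minimum ranges over \emph{all} misses for item $i_t$ in the window, not just epoch starters, and the window $[t-Z+1,t]$ by itself guarantees that the corresponding fetch is still outstanding at time $t$. (Your alternative characterization in the ``subtlety'' paragraph, requiring no completed fetch for $v$ strictly between $t''$ and $t'$, is also not equivalent to this and fails on, e.g., misses at times $1,3,7$ with $Z = 5$, where the fetch from time $1$ completes at time $5$ but the request at time $7$ is nonetheless served by the fetch from time $3$.)

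The other half of your plan --- the induction showing that the queue contents and the accumulated latency are determined by $(\sigma, b)$ alone, independently of eviction choices --- is sound, and is the right way to see that $\ell_\sigma$ is well defined; the paper makes essentially the same point more briefly. But the proof as proposed would fail at the matching step, because the closed form you would be matching the execution against is not the one the model realizes. Replacing the epoch construction with the paper's per-request formula (earliest miss for the same item within the last $Z$ steps) repairs the argument.
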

\begin{proof}
For each $i_t$, let $p_t := \min_{t' \in [t - Z + 1, t] : b_{t'} = 0 \wedge i_{t'} = i_t} t'$, and let $l_t := (1 - b_t)(Z - (t - p_t))$. The claim is that $L = \sum_{t = 1}^T l_t$. Since the $l_t$'s are clearly computable from $\sigma$ and $b$, the lemma follows.

It suffices to show that $l_t$ is precisely the latency incurred by $i_t$ in the execution of the algorithm. We observe that if $b_t = 1$ then $l_t = 0$, so it remains to argue that the latency incurred is $Z - (t - p_t)$ when $i_t$ is a miss. When $i_t$ is a miss, the only way the latency for $i_t$ to be less than $Z$ is if there is already a request for $i_t$ ``in flight'' at time $t$. Suppose that the request at time $t'$ is the request that, in the retrieval phase, is used to serve $i_t$. Since the request for $i_{t'}$ returns in the retrieval phase at time $t'' = t' + Z - 1$, the total latency incurred for $i_t$ is $t'' - t + 1 = t' + Z - 1 - t + 1 = Z - (t - t')$. Hence, finding the smallest such $t'$ results in the earliest time $t'$ that is used to serve the request $t$, which is the latency incurred in the algorithm. The value $p_t$ is the earliest such $t'$, as it is the earliest time $t'$ that $i_t$ was requested where the request was (1) a miss, so that the request will be sent to the backing store, and (2) $t' \leq t$ and $t - t' \leq Z - 1$, so that the request for $i_{t'}$ will come back in a retrieval phase $t''$ with $t'' \geq t$. This completes the proof.
\end{proof}

\subsection{Antimonotone boolean functions}
In this section, we define antimonotone boolean functions. We begin by defining a partial ordering on $\Bits^n$.

\begin{definition}
Let $b, b' \in \Bits^n$. We say that $b \leq b'$ if for every $i \in [n]$ it holds that $b_i \leq b'_i$.
\end{definition}
Note that this is only a partial ordering since, e.g., \mbox{$(1,0), (0,1) \in \Bits^2$} are incomparable with this relation. We now define monotone and antimonotone boolean functions.

\begin{definition}
A boolean function $f \colon \Bits^n \to \R$ is \emph{monotone} if for every $b,b' \in \Bits^n$ with $b \leq b'$ it holds that $f(b) \leq f(b')$. A boolean function $f \colon \Bits^n \to \R$ is \emph{antimonotone} if $-f$ is monotone.
\end{definition}
We note that both AND and OR are monotone, NAND and NOR are antimonotone, and XOR is neither monotone nor antimonotone.

\section{Lower bound on the competitive ratio}
\label{sec:ratio}
In this section, we prove \Cref{thm:1}. Given a deterministic caching algorithm $\A$, we show how to construct a request sequence $\sigma_\A$ where $\A$'s latency is a factor of $\Omega(k Z)$ larger than the latency of the optimal offline algorithm. We begin by defining our two building blocks: \emph{pure} and \emph{bursty} requests.

\parhead{Pure and bursty requests.}  A pure request is a sequence of requests of the form $(0^Z, i, 0^Z)$, where the notation $j^Z$ means that $j$ is requested $Z$ timesteps in a row. A bursty request is a sequence of the form $(0^Z, i^Z, 0^Z)$. We observe that pure and bursty requests are \emph{isolated}: there are no ``in flight'' requests when the item $i$ is first requested, and there are no ``in flight'' requests at the end of the sequence. Because of this, if the first item requested in either of these sequences is a cache hit then every item requested in the sequence is a cache hit, and likewise if the first item requested is a miss then every item requested is a miss. Because of this, we say that a pure/bursty request is a hit if the first item requested is a hit; else it is a miss. Clearly, if a pure/bursty request is a hit then the latency accrued is $0$. If a pure request is a miss then the latency accrued is $Z$, and if a bursty request is a miss then the latency accrued is $Z + (Z-1) + \dots + 1 = \frac{Z(Z+1)}{2}$.

\parhead{Constructing the request sequence $\sigma_{\A}$.} We now construct a request sequence $\sigma_{\A}$ iteratively from pure and bursty requests, using a marking procedure defined as follows. The marking procedure maintains a set of marked items $M$, initialized as $M = \emptyset$. Whenever item $i$ is requested as part of a bursty request, we add it to $M$, marking it. This mark persists outside of the cache, and can not be removed, e.g.\ if an object $i$ is marked and is requested again before the request sequence ends, then the mark persists unaltered.

Assume that $n > k$, and recall that the cache is initialized to $S_0 := \{1, \dots, k\}$. We construct $\sigma_{\A}$ using only pure/bursty requests. We define the request sequence $\sigma_{\A}$ iteratively as follows. We initialize $\sigma_{\A}$ to be a pure request for $k+1$. Then, we append a bursty request for the item in $\{1, \dots,k+1\}$ currently not in $\A$'s cache, and then afterwards we append a bursty request for the next item in $\{1, \dots, k+1\}$ currently not in $\A$'s cache, and so on. We terminate this process when $k$ items have been marked, that is when $\abs{M} = k$, and this results in the final request sequence $\sigma_{\A}$.

\parhead{Computing the latency of $\opt$.} We show that the optimal algorithm achieves a latency of $Z$. Every algorithm begins with $1, \dots, k$ in their cache, and the first request in $\sigma_\A$ is a pure request for object $k+1$. Thus, every algorithm must miss on the first request, so any algorithm must have a latency of at least $Z$ on $\sigma_\A$. We now show that there is a way to achieve a latency of $Z$, which makes $\opt(\sigma_{\A}) = Z$. Let $M$ be the set of all marked items at the end of $\sigma_{\A}$. Because we end when $k$ items have been marked, there is some item $j \in \{1, \dots, k+1\} \setminus M$, and by definition this item was never requested in any of the bursty requests in $\sigma_{\A}$. Hence, if $\opt$ evicts $j$ during the pure request for $k+1$\footnote{Note that we could have $j = k+1$, in which case $\opt$ simply doesn't cache $k+1$.} and never changes the cache afterwards, then $\opt$'s cache $S$ will contain every object in $\{1, \dots, k+1\}$ except for $j$. Since these are the only objects requested in the bursty requests of $\sigma_{\A}$, it follows that $\opt$ will never have another miss. Thus, $\opt(\sigma_\A) = Z$.

\parhead{Lower bounding the latency of $\A$.} We now show that $\A$'s latency at least $Z + k \frac{Z(Z+1)}{2}$. We observe that, by construction of $\sigma_{\A}$, the algorithm $\A$ misses on \emph{every} request in $\sigma_{\A}$. This is because the each item in $\sigma_{\A}$ is chosen to be precisely the item that is not in $\sigma_{\A}$'s cache at that time. Since we terminate the request sequence once $k$ items have been marked, it follows that $\sigma_{\A}$ contains at least $k$ bursty requests. Hence, the latency of $\A$ is at least
\begin{equation*}
\A(\sigma_{\A}) \geq Z + k \frac{Z(Z+1)}{2} \enspace.
\end{equation*}

\parhead{Putting it together.} Since $\A(\sigma_{\A}) \geq Z + k \frac{Z(Z+1)}{2} $ and $\opt(\A) = Z$, it follows that the competitive ratio is at least $\frac{1}{Z} \cdot (Z + k \frac{Z(Z+1)}{2}) = 1 + k \frac{Z+1}{2} = \Omega(kZ)$, as desired. This finishes the proof of \cref{thm:1}.

\section{Non-antimonotonicity of latency in delayed hits}
\label{sec:antimonotone}
In this section, we prove \cref{thm:2}. In fact, we will prove the following stronger lemma.
\begin{lemma}
\label{lem:thm2}
For $Z \geq 5$, there is a request sequence $\sigma$, and hit sequences $b,b'$ feasible with respect to $\sigma$ with $b' \geq b$ and $b'_t = b_t$ for all but one $t \in [T]$ such that $\ell_{\sigma}(b') - \ell_{\sigma}(b) = \Omega(Z^2)$ and $\ell_{\sigma}(b) = \opt(\sigma)$, that is, $\ell_{\sigma}(b)$ is the minimal possible latency for $\sigma$.
\end{lemma}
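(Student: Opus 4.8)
The plan is to make the difference $\ell_{\sigma}(b') - \ell_{\sigma}(b)$ transparent using the explicit formula of \cref{lem:latencyfn}, and then to build $\sigma$ so that this difference is a positive $\Omega(Z^{2})$ while $b$ is still optimal. Write $i$ for the item requested at the single index $t^{*}$ where $b$ and $b'$ differ, with $b_{t^{*}} = 0$ and $b'_{t^{*}} = 1$. Since $b$ and $b'$ agree everywhere else, the only terms $l_t$ in the formula that can change are $l_{t^{*}}$ and those $l_t$ with $t^{*} \in [t - Z + 1, t]$ and $i_t = i$; call that set of indices $W$. I would design $\sigma$ so that $i$ is requested at $t^{*}$, is \emph{not} requested on $(t^{*}, t_1)$ for some $t_1 \approx t^{*} + Z/2$, and \emph{is} requested at every step of the ``burst'' $[t_1, t^{*}+Z-1]$, so that $W = \{t_1, t_1+1, \dots, t^{*}+Z-1\}$ and $\abs{W} = \Theta(Z)$. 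In the execution realizing $b$, the request at $t^{*}$ is a full cache miss whose fetch is in flight through time $t^{*}+Z-1$, so every burst request is a delayed hit with $p_t = t^{*}$ and latency $Z - (t-t^{*})$. In the execution realizing $b'$, the request at $t^{*}$ is a hit (so $i$ is in the cache then), but $i$ is evicted before $t_1$, so the first burst request $t_1$ is a full miss and the rest are delayed hits anchored at $t_1$. A one-line computation with the formula then gives $l_t(b') - l_t(b) = t_1 - t^{*}$ for every $t \in W$ and $l_{t^{*}}(b') - l_{t^{*}}(b) = -Z$, hence
\[
\ell_{\sigma}(b') - \ell_{\sigma}(b) \;=\; \abs{W}\cdot (t_1 - t^{*}) - Z \;=\; \Omega(Z^{2}).
\]

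For feasibility I would flesh out the surrounding request sequence as follows. Because $i \notin S_0$, the hit at $t^{*}$ in the $b'$-execution is only possible if $i$ was cached earlier, so $\sigma$ requests $i$ once well before $t^{*}$; in both executions this first request is a full miss, and at its retrieval the $b'$-execution caches $i$ (evicting a designated item) while the $b$-execution does not. To evict $i$ in the $b'$-execution strictly between $t^{*}$ and $t_1$, I will place a cache miss on an auxiliary ``junk'' item at a time in $[t^{*}-Z+1,\, t_1 - Z)$ so that its retrieval falls in $(t^{*}, t_1)$; the $b'$-execution uses this retrieval to evict $i$, and the $b$-execution uses it to evict a never-again-requested junk item. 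The hypothesis $Z \geq 5$ is what makes the gap $(t^{*}, t_1)$ and the burst $[t_1, t^{*}+Z-1]$ simultaneously long enough for this timing to work while everything stays inside one length-$Z$ window. The remaining timesteps of $\sigma$ will be filled with pure and bursty requests (in the sense of \cref{sec:ratio}) for the items $1, \dots, k$, placed so that (i) their hit/miss status is forced, so $b$ and $b'$ genuinely agree off $t^{*}$, and (ii) evicting any $j \in [k]$ near the gadget forces a missed \emph{bursty} request on $j$ and hence costs $\Omega(Z^{2})$.

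Property (ii) is what drives the optimality claim $\ell_{\sigma}(b) = \opt(\sigma)$: for any feasible hit sequence $c$, either $i$ stays out of the cache throughout the burst region — in which case the burst contributes at least as much as under $b$, and since the other bits are forced, a short comparison shows $b$ minimizes $\ell_{\sigma}$ over all such $c$ — or $i$ is cached over part of the burst, which by (ii) costs an extra $\Omega(Z^{2})$ that provably exceeds the burst savings (one checks $Z(Z+1)/2 > Z^{2}/8 + Z$ for $Z \geq 5$). Combining the two cases yields $\ell_{\sigma}(b) \leq \ell_{\sigma}(c)$ for every feasible $c$.

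The main obstacle, I expect, is not the latency arithmetic but getting the construction to satisfy all three demands at once: that $b'$ be feasible (so $i$ must be cacheable at $t^{*}$ and then forcibly evicted before $t_1$, with a matching non-$t^{*}$ behavior in the $b$-execution), that every bit other than $t^{*}$ be pinned down, and that $b$ be provably optimal — the last requiring a careful case analysis over all feasible behaviors on $\sigma$. Making the pure/bursty padding accomplish all of this simultaneously, with the gadget's events packed into a single window of length $Z$, is where the delicacy (and the $Z \geq 5$ hypothesis) lives.
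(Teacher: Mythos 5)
Your core gadget is the same as the paper's: a single request for $i$ at time $t^{*}$ followed, after a gap of about $Z/2$ steps, by a burst of about $Z/2$ requests for $i$ inside the same length-$Z$ window, so that a \emph{miss} at $t^{*}$ anchors the whole burst as cheap delayed hits while a \emph{hit} at $t^{*}$ forces the burst to re-anchor at its own first request. Your computation $\ell_{\sigma}(b')-\ell_{\sigma}(b)=\abs{W}\,(t_1-t^{*})-Z=\Omega(Z^{2})$ is exactly the paper's $z(Z-z)-Z$ with $z=\lfloor Z/2\rfloor$, and your use of the explicit formula from \cref{lem:latencyfn} to isolate which terms change is a clean way to present it.

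The gap is in the surrounding construction that is supposed to make $b'$ feasible while keeping $b$ optimal, and it is not merely "delicacy": as described, your two requirements contradict each other. If, as your property (ii) literally states, evicting \emph{any} $j\in[k]$ costs $\Omega(Z^{2})$ via a missed bursty request on $j$, then the $b'$-execution — which must evict some $j\in[k]$ at $i$'s first retrieval in order to have $i$ cached at $t^{*}$ — incurs that penalty and, worse, flips the hit bits of $j$'s bursty request, violating "agree off $t^{*}$." If instead you exempt one designated item (as your phrase "evicting a designated item" suggests), then nothing forces $i$ out of the cache before the burst: your auxiliary junk item is "never again requested," so its retrieval is only an \emph{opportunity} to evict $i$, not a compulsion, and the algorithm that caches $i$ at its first retrieval (evicting the free item) and keeps it through the burst gets every burst request as a genuine hit, beating $\ell_{\sigma}(b)$ and destroying the optimality claim. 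The paper closes exactly this hole by making the junk item itself ($k+2$) the subject of a forced bursty request at the very end of $\sigma$: since the only retrieval event at which $k+2$ can enter the cache is the one falling right at $t^{*}$, any near-optimal algorithm \emph{must} cache $k+2$ there, and pigeonhole against the pinned items $2,\dots,k$ then forces $i=k+1$ out — leaving $b$ and $b'$ as the only candidate optimal hit sequences. Your write-up is missing this "force the junk item into the cache" mechanism, and without it the claim $\ell_{\sigma}(b)=\opt(\sigma)$ fails for the sequence you describe.
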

Note that, in particular, this implies \cref{thm:2} as it shows that $\ell_{\sigma}$ is not antimonotone. \cref{lem:thm2} shows that there is a scenario in which an algorithm has the option to have one additional cache hit if it wants, and it is optimal to not have the additional cache hit.

\begin{proof}
Let $z = \lfloor Z/2 \rfloor$. As a building block, we first consider the following request sequence $\sigma$: $i_1 = k+1$ and $i_{Z-z+1} = \ldots = i_{Z} = k+1$. Any algorithm $\A$ with input $\sigma$ will produce the hit sequence $b := (0, \dots, 0)$ and incur a latency of $\ell_{\sigma}(b) = Z + \sum_{r = 1}^{z} r = Z + z(z+1)/2$. Let $b' := (1, 0, \dots, 0)$. Observe that now, we have that $\ell_{\sigma}(b') = 0 + \sum_{r = 1}^z (Z - r + 1) = (Z + 1)z - z(z+1)/2$, and so we have that $\ell_{\sigma}(b) < \ell_{\sigma}(b')$ for $Z \geq 5$. Moreover, since $b$ is the only feasible hit sequence, we have that $\ell_{\sigma}(b)$ is the global minimum.

We now use the building block to finish the proof. For simplicity we will first assume that the cache size $k$ is $1$, and then show how to generalize the proof to larger $k$.

We modify $\sigma$ to define $\sigma'$ as follows. We first set $i_1 = k+1$, $i_2 = k+2$. Then, we play out $\sigma$ starting at $t = Z+1$, so $i_{Z+1} = k+1$, $i_{2Z - z + 1} = \ldots = i_{2Z} = k+1$. Finally, we set $i_{3Z + 1} = \ldots = i_{4Z} = k+2$. The modifications force the optimal algorithm to do the following. First, the request $i_1$ allows the algorithm to cache $k+1$ by time $t = Z+1$. The requests for item $k+2$ are there so that $k+1$ must be evicted by the cache in the retrieval phase of $t = Z+1$, or else the algorithm will incur a latency of $Z(Z+1)/2$ from the $Z$ requests for $k+2$ at the end. These combinations force the optimal algorithm to only need to decide whether or not to make the request at time $t = Z+ 1$ be a cache hit, and the building block earlier will show that it is better to not cache it.

In more detail, we observe that the hit sequence $b$, where $b_t = 0$ for all $t \leq 2Z$, and $b_{3Z+1} = \dots = b_{4Z} = 1$, is feasible. This is because this is the hit sequence for the algorithm that caches the first request for $k+2$ and otherwise does not modify the cache. We also observe that the hit sequence $b'$, where $b'_t = b_t$ except $b'_{Z+1} = 1$, is also feasible. This is because it is possible to cache the request $k+1$ at time $0$, use it to have a cache hit for the request at time $Z+1$, evict $k+1$ in $t = Z+1$'s retrieval phase when the request for $k+2$ comes back, and then never modify the cache again. Our building block shows that the difference in latencies is 
\begin{equation*}
\ell_{\sigma}(b') - \ell_{\sigma}(b) = \Big((Z+1)z - z(z+1)/2\Big) - \Big(Z + z(z+1)/2\Big) = z(Z - z) - Z \enspace,
\end{equation*}
 which is strictly greater than $0$ for $Z \geq 5$.

It remains to show that the hit sequence $b$ is optimal. Every algorithm must have a cache miss for the requests $i_1$ and $i_2$. Any algorithm which has a cache miss for $i_{3Z}$ incurs a latency of at least $Z(Z+1)/2$, because the request $i_{3Z}$ occurs $Z$ timesteps after $i_{2Z}$. This is greater than $\ell_{\sigma}(b)$ for $Z \geq 3$, so it follows that the optimal algorithm must cache $i_2$ during the retrieval phase at time $t = Z+1$, and never change the cache after that. It follows that the only potentially optimal realizable hit sequences are $b$ and $b'$, but we already know that $\ell_{\sigma}(b) < \ell_{\sigma}(b')$, so $b$ is optimal.

Finally, we explain how to modify the proof to work for $k > 1$. We modify $\sigma$ and add a sequence of $Z$ requests for item $i$ for $i \in \{2, \dots, k\}$ between the sequence of $z$ requests for $k+1$ and the sequence of $Z$ requests for $k+2$. As before, we space the sequences so that they are at least $Z$ timesteps apart. This forces the optimal algorithm to have items $2, \dots, k$ in the cache at time $t = 2Z$, as if it ever evicts item $i$ it will incur a latency of at least $Z(Z+1)/2$ from the sequence of $Z$ requests for item $i$. This forces the optimal caching algorithm to only be able to evict item $1$, which then reduces to the case where $k = 1$.
\end{proof}

\section{Antimonotone delayed hits}
\label{sec:model}
In this section, we prove \cref{thm:3}. In \cref{sec:antimodel} we define the antimonotone delayed hits model and show that its latency function is antimonotone, and in \cref{sec:thm3proof} we prove \cref{thm:3}.

\subsection{The antimonotone delayed hits model}
\label{sec:antimodel}
The antimonotone delayed hits model is a simple modification to the delayed hits model. We use our intuition from \cref{lem:thm2} to motivate the modification. In the proof of \cref{lem:thm2}, we constructed a request sequence $\sigma$ where when a particular request misses, it decreases the latencies of $Z/2$ requests by $Z/2$ \emph{each}, resulting in an overall decrease in latency. The issue is that if the request were to hit, then we would not fetch the item from the backing store, so then there is no request for the item ``in flight'' when the first of the $Z/2$ requests arrives. This issue is fixed by fetching items from the backing store \emph{even when there is a cache hit}, so that way it is never advantageous to have a cache miss.

Formally, the modification is as follows. Before, when we had a request $i_t$ that was a cache hit we would not send this request to the backing store; now we do.\footnote{In a real-world system this results in always sending requests to the backing store even if the request is in the cache. Whether or not this is realistic in practice depends on whether or not the backing store has the additional throughput to handle the extra requests.} This results in the following latency function.
\begin{fact}[Latency function for antimonotone delayed hits]
The latency function for the antimonotone delayed hits model is as follows. For a request sequence $\sigma = (i_t)_{t = 1}^T$, we let $\ell'_{\sigma} \colon \Bits^T \to \R$ be the function $\sum_{t = 1}^T l_t$ where $l_t$ is defined to be $l_t = (1 - b_t)(Z - t + \min_{t' \in [t - Z + 1, t] : i_{t'} = i_t} t')$.
\end{fact}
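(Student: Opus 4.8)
The plan is to follow the proof of \cref{lem:latencyfn} essentially verbatim, tracking the only change introduced by the antimonotone model: every request is now forwarded to the backing store even when it is a cache hit, so a fetch is initiated at every time at which the corresponding item is requested, not only at the times when that request was a miss. As before, I would write the total latency $L$ incurred during an execution of an algorithm on $\sigma$ (with resulting hit vector $b$) as $L = \sum_{t=1}^T l_t$, where $l_t$ is the latency charged to the request $i_t$, and then show that $l_t$ equals the claimed closed form. If $b_t = 1$ the request is served during its own request phase, so $l_t = 0$, matching the factor $(1 - b_t)$; the degenerate case $i_t = 0$ is handled with the same convention as in \cref{lem:latencyfn}.

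For $b_t = 0$, the request enters the queue $Q$ and is served the first time a fetch for $i_t$ returns. A fetch initiated at time $t'$ returns in the retrieval phase at time $t' + Z - 1$, so it can serve the request at time $t$ exactly when $t' \le t$ and $t' + Z - 1 \ge t$, i.e.\ when $t' \in [t - Z + 1, t]$. In the antimonotone model a fetch is initiated at every time $t'$ with $i_{t'} = i_t$, so the set of candidate fetch times is precisely $\{t' \in [t-Z+1, t] : i_{t'} = i_t\}$, which is nonempty since it contains $t$ itself. The request at $t$ is served by the earliest such fetch to return, namely the one initiated at $p_t := \min\{t' \in [t-Z+1,t] : i_{t'} = i_t\}$, which returns at time $p_t + Z - 1$; hence $l_t = (p_t + Z - 1) - t + 1 = Z - t + p_t$, exactly as claimed. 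Computability of $\ell'_\sigma$ is immediate, since each $l_t$ is obtained from $\sigma$ and $b$ by a finite minimization.

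Since this is essentially a one-line modification of an already-established lemma, I do not expect a real obstacle. The only points requiring care are the off-by-one bookkeeping — a fetch initiated at $t'$ returns at $t' + Z - 1$ and thus ``covers'' the window $[t', t' + Z - 1]$ — and the verification that the correct change relative to \cref{lem:latencyfn} is precisely to drop the constraint ``$b_{t'} = 0$'' from the minimization. This is justified because forwarding cache hits to the backing store means that every request, not just the misses, produces an in-flight fetch that can later serve a missed request for the same item, which is exactly the property that will make $\ell'_\sigma$ antimonotone.
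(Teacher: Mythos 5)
Your proposal is correct and matches the paper's justification, which likewise derives this latency function by rerunning the proof of \cref{lem:latencyfn} with the single change that fetches are now initiated on hits as well, so the constraint $b_{t'}=0$ is dropped from the minimization defining $p_t$. Your off-by-one bookkeeping ($l_t = (p_t + Z - 1) - t + 1 = Z - t + p_t$) agrees with the computation in \cref{lem:latencyfn}, and your observation that the candidate set always contains $t$ itself is a useful (if minor) addition.
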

The explicit definition of the latency function is nearly identical to one for the delayed hits model that was uncovered in the proof of \cref{lem:latencyfn}. The only difference is now we $p_t = \min_{t' \in [t - Z + 1, t] : i_{t'} = i_t} t'$ instead of $\min_{t' \in [t - Z + 1, t] : b_{t'} = 0 \wedge i_{t'} = i_t} t'$. This is because the request made at time $t'$ can now be used to serve the request at time $t$ even when $b_{t'} = 1$, i.e.\ even when the request at time $t'$ was a cache hit. As we shall see, this removes the dependencies of the latency function on the hit sequence that caused the delayed hits latency function to be antimonotone.

\begin{claim}
For every $\sigma$, $\ell'_{\sigma}$ is antimonotone.
\end{claim}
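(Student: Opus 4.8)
The plan is to exploit the one feature of the antimonotone model that distinguishes it from ordinary delayed hits: in the formula for $\ell'_\sigma$, the index $p_t := \min_{t' \in [t - Z + 1, t]\,:\, i_{t'} = i_t} t'$ no longer references the hit sequence $b$, but depends only on the fixed request sequence $\sigma$. So I would first rewrite $\ell'_\sigma(b) = \sum_{t = 1}^T (1 - b_t)\, c_t$, where $c_t := Z - t + p_t$ is a constant determined entirely by $\sigma$. Then antimonotonicity of $\ell'_\sigma$ reduces to the statement that it is a \emph{nonnegative}-weighted sum of the (antimonotone) coordinate functions $b \mapsto 1 - b_t$.

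The only thing to check is that each coefficient $c_t$ is nonnegative. The set $\{t' \in [t - Z + 1, t] : i_{t'} = i_t\}$ is nonempty since it contains $t$ itself, so $p_t$ is well defined; and every element $t'$ of that set satisfies $t' \ge \max(1, t - Z + 1)$, hence $p_t \ge \max(1, t - Z + 1)$. Substituting, $c_t = Z - t + p_t \ge Z - t + \max(1, t - Z + 1) \ge 1 > 0$, which handles both the case $t \ge Z$ (where the max is $t - Z + 1$, giving $c_t \ge 1$) and $t < Z$ (where the max is $1$ and $Z - t \ge 1$, giving $c_t \ge 2$). With $c_t \ge 0$ in hand, fix $b, b' \in \Bits^T$ with $b \le b'$: then $1 - b_t \ge 1 - b'_t \ge 0$ for every $t$, so $(1 - b_t) c_t \ge (1 - b'_t) c_t$, and summing over $t \in [T]$ gives $\ell'_\sigma(b) \ge \ell'_\sigma(b')$. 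Hence $-\ell'_\sigma$ is monotone, i.e.\ $\ell'_\sigma$ is antimonotone.

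I do not expect a genuine obstacle: the entire content of the claim is the observation (already anticipated in the discussion preceding it) that dropping the clause $b_{t'} = 0$ from the definition of $p_t$ severs its dependence on $b$, after which $\ell'_\sigma$ is manifestly a nonnegative combination of antimonotone functions. The single line requiring any care is the bound $c_t \ge 0$, and there the inequality $p_t \ge t - Z + 1$ — forced simply by the range of the minimum — does all the work. It is instructive to contrast this with $\ell_\sigma$ from \cref{lem:latencyfn}: there the analogous $p_t$ can jump discontinuously when some $b_{t'}$ flips from $0$ to $1$, which is exactly the mechanism behind the non-antimonotonicity exhibited in \cref{lem:thm2}.
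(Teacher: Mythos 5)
Your proof is correct and follows essentially the same route as the paper's: both arguments rest on the observation that $p_t$ depends only on $\sigma$, so $\ell'_\sigma$ is a sum of terms $(1-b_t)(Z - t + p_t)$ with nonnegative coefficients (nonnegativity following from $p_t \ge t - Z + 1$). The paper computes the difference $\ell'_\sigma(b) - \ell'_\sigma(b')$ directly while you compare termwise, but this is a presentational difference only.
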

\begin{proof}
Fix $b, b' \in \Bits^T$ with $b' \geq b$. Let $S \subseteq [T]$ be the set of $t$ where $b'_t = b_t$. Note that when $t \notin S$ we must have $b'_t = 1$ and $b_t = 0$. Let $p_t = \min_{t' \in [t - Z + 1, t] : i_{t'} = i_t} t'$. We have that $\ell'_{\sigma}(b) = \sum_{t} l_t$ and $\ell'_{\sigma}(b') = \sum_{t} l'_t$ where $l_t = (1 - b_t)(Z - t + p_t)$ and $l'_t = (1 - b'_t)(Z - t + p_t)$. Hence,
\begin{flalign*}
&\ell'_{\sigma}(b) - \ell'_{\sigma}(b') = \sum_{t = 1}^T (l_t - l'_t) = \sum_{t = 1}^T (b'_t - b_t)(Z - t + p_t)  \\
&= 0 + \sum_{t \notin S} (b'_t - b_t)(Z - t + p_t) = \sum_{t \notin S} (Z - t + p_t) \geq 0 \enspace,
\end{flalign*}
since $t - p_t \leq Z - 1$. Therefore, $\ell'_{\sigma}$ is antimonotone.
\end{proof}

\subsection{Proof of \cref{thm:3}}
\label{sec:thm3proof}
Let $\A$ be an algorithm in the antimonotone delayed hits model that uses a cache of size $k$. We give an algorithm $\B$ in the delayed hits model that uses a cache of size at most $k + Z$ such that $\text{latency}(\A)$ in the antimonotone delayed hits model is at least $\text{latency}(\B)$ in the delayed hits model. 

The algorithm $\B$ splits its cache into two components: $S_0$ and $S_1$, where $S_0$ is a cache of size $k$ and $S_1$ is a cache of size $Z$. The algorithm $\B$ then simulates the algorithm $\A$, using $S_0$ to maintain $\A$'s cache. Additionally, the algorithm $\B$ uses $S_1$ to store the last $Z$ requests returned from the backing store. For example, when $\A$ evicts an item from its cache, $\B$ will keep the item in $S_1$ if it was one of the last $Z$ requests that it has seen.

We now show that $\B$'s latency is at most $\A$'s latency. In fact, we show that the latency experienced by any request $i_t$ is can only be lower for $\B$ than for $\A$. This is because the only difference between the antimonotone delayed hits model and the delayed hits model is that we send every request to the backing store, irrespective of whether or not it is a cache hit. So, in the antimonotone delayed hits model is it possible for the item to have a lower latency than in the delayed hits model, but only if it was one of the last $Z$ items requested. But in this case it will be in $\B$'s cache, and so it will have a latency of $0$ for algorithm $\B$, which is at most the latency it has in algorithm $\A$, which finishes the proof.

\section{Acknowledgements}
The authors thank Nirav Atre for helpful discussions, and Magdalen Dobson for providing helpful comments on an earlier draft of the manuscript.

\newpage
\bibliography{references}
\bibliographystyle{alpha}
\end{document}